\newcommand{\ecc}{{\mathrm{ecc}}}
\newcommand{\inst}[1]{$^{#1}$}
\newtheorem{thm}{Theorem}
\newtheorem{lem}{Lemma}
\newtheorem{prop}{Proposition}
\newtheorem{obs}{Observation}
\newtheorem{question}{Question}
\newtheorem{con}{Conjecture}
\newtheorem{problem}{Problem}
\theoremstyle{definition}
\newtheorem{rem}{Remark}
\begin{document}

\title{Modelling simultaneous broadcasting by level-disjoint partitions\footnote{This research was supported by the Czech Science Foundation grant GA14-10799S, ARRS Program P1-0383, and by ARTEMIS-JU project "333020 ACCUS".}}
\author{ \small Petr Gregor \inst{1}, \small Riste \v Skrekovski \inst{2,3}, and \small Vida Vuka{\v s}inovi{\'c}  \inst{4}}
\date{}
\maketitle
\begin{center}
{\footnotesize

\inst{1} Department of Theoretical Computer Science and Mathematical Logic, Charles University,\\
         Malostransk\'{e} n\'{a}m.~25, 11800 Prague, Czech Republic \\
        \texttt{gregor@ktiml.mff.cuni.cz}\\

\inst{2} Department of Mathematics, University of Ljubljana,\\
         Jadranska~19, 1000 Ljubljana, Slovenia\\

\inst{3} Faculty of Information Studies, \\
         Ljubljanska cesta~31A, 8000 Novo mesto, Slovenia\\
         \texttt{skrekovski@gmail.com}\\

\inst{4} Computer Systems Department, Jo{\v z}ef Stefan Institute, \\
         Jamova~39, 1000 Ljubljana, Slovenia\\
         \texttt{vida.vukasinovic@ijs.si}\\

}
\end{center}

\begin{abstract}
Simultaneous broadcasting of multiple messages from the same source vertex in synchronous networks is considered under restrictions that each vertex receives at most one message in a unit time step, every received message can be sent out only in the next time step, no message is sent to already informed vertex. The number of outgoing messages in unrestricted, messages have unit length, and we assume full-duplex mode.
In \cite{GSV} we developed a concept of level-disjoint partitions to study simultaneous broadcasting under this model. In this work we consider the optimal number of level-disjoint partitions. We also provide a necessary condition in terms of eccentricity and girth on existence of $k$ $v$-rooted level-disjoint partitions of optimal height. In particular, we provide a structural characterization of graphs admitting two level-disjoint partitions with the same root. 
\end{abstract}

{\bf Keywords} simultaneous broadcasting; multiple message broadcasting; level-disjoint partitions; interconnection networks

%
%

\section{Introduction}

Broadcasting has been subject of extensive study and plays an important role in the design of communication protocols in various kinds of networks, see surveys~\cite{Grig,HHL,HKPRU}. The critical point for high performance computing is to assure efficient broadcasting, where data transmission is often found as the bottleneck. Here, the interconnection network is modeled as an undirected graph in which the vertices correspond to processors and the edges correspond to communication links between the processors. In this paper we restrict ourselves to synchronous networks, where at each time step messages can be sent from vertices to all their neighbors in unit time.

In \textit{broadcasting}, a single message located at one vertex has to be spread to all other vertices in the network. We study a more general variant, when several different messages are to be simultaneously transmitted from one source. This is motivated by a situation when large amount of data needs to be broadcasted in a network with bounded capacity of links. In this case data can be split into multiple messages and sent individually. Similar concepts for secure distribution of messages based on independent spanning trees were studied in~\cite{Chen,Cheng,Kim}.

The problem of same-source multiple broadcasting was considered previously under several different models~\cite{Noy,Haru}. 
In this paper we consider a communication model with restrictions that:

\begin{enumerate}[\hspace{0.5cm}(a)]
\item Different messages cannot arrive in the same vertex at the same time. Equivalently, each vertex can receive only one message in a unit time step (\emph{1-in port model}).
\item Every received message can be sent to neighbors only in the next step (\emph{no-buffer model}).
\item At every step, a message is sent only to vertices which have not received it yet (\emph{non-repeating model}).
\end{enumerate}

Note that although we restrict the number of incoming messages at each vertex to at most one, we do not limit the number of outgoing messages. That is, each vertex can send his message to all his neighbors at the same time (\emph{all-out port model}). We assume that messages have unit length; that is, each message (and only one message) can be sent through a link in a unit time step, and we further assume a so called \emph{full-duplex} mode, in which two adjacent vertices can exchange their messages through their link at the same time.

The condition (a) ensures that the load on vertices is minimized. Furthermore, it is useful for security reasons to avoid two messages meeting at a vertex at the same time, if untrustworthy vertex is able to decode secret information from having access to more messages at the same time. Similarly, an algorithm for secure message distribution with even more restricted condition when different messages can arrive in the same vertex only via distinct paths is proposed in~\cite{ChangYang}. The condition (b) can be useful for example in streaming context when every message needs to be sent immediately without delay. This is known as memoryless or queueless communication \cite{Jung}. The condition (c) ensures that (b) cannot be circumvented by sending message to a neighbor and then back. It also helps to prevent unnecessary network congestion and message delays.

Under the given model the following question arises: ``What is the minimal overall time needed for simultaneous broadcasting?''  Another questions are: ``What is the maximal number of messages that can be simultaneously broadcasted?'', and ``Is it always possible to broadcast them in the optimal time?'' Throughout the paper, \emph{optimal time} means meeting the lower bound given by eccentricity of the source vertex (see Observation~\ref{obs:c2}).

These questions were already studied. The time optimality for same-source simultaneous broadcasting were the subject of study in \cite{Noy,GSV,Haru}, while the maximum number of messages that can be simultaneously broadcasted via Hamiltonian cycles from a single vertex were studied in \cite{Sun, VGS}.

Let us remark that without restricting the number of messages that each vertex can receive at a time, a vertex may receive enormous amount of data altogether from its neighbors, which can cause a delay~\cite{ChangChen,Chang,FL}. The study of efficiency of the 1-in port models can be found in~\cite{Aver,Noy,Haru}.

Simultaneous broadcasting in the case when also the number of outgoing messages is limited to one (\emph{1-out port model}) has been considered previously in \cite{MI1,KW,KLH,Sun}. Observe that in the 1-out port non-repeating model each broadcasted message traverses a Hamiltonian path~\cite{Hsieh,LHHB}.


In the paper~\cite{GSV} we developed a concept of level-disjoint partitions to study how many messages and in what time can be simultaneously broadcasted under considered model from a given source vertex in a given graph. In this paper we show that the problem of simultaneous broadcasting in a graph $G$ can be solved locally on a suitable subgraph $H$ of $G$ and then extended to a solution for the whole graph $G$ (Lemma~\ref{lem:LDPdk}). 
We provide a structural characterization of graphs that admit simultaneous broadcasting of two messages from a given vertex (Theorem~\ref{thm:k=2}). In addition, we provide a necessary condition in terms of girth and eccentricity of the root for existence of $k$ same-rooted level-disjoint partitions of optimal height (Proposition~\ref{prop:cond1&2}).

\section{Level-disjoint partitions}

We use the following concept to capture the information dissemination in graphs according to our setting which was originally introduced in \cite{GSV}. A \emph{level partition} of a graph $G$ is a partition $\mathcal{S}=(S_0,\dots,S_h)$ of $V(G)$ into a tuple of sets, called \emph{levels}, such that
\begin{equation}\label{eq:ldp}
S_{i}\subseteq N(S_{i-1})
\end{equation}
for every $1\le i \le h$; that is, every vertex has a neighbor from previous level. The number $h=h(\mathcal{S})=|\mathcal{S}|-1$ is called the \emph{height} of $\mathcal{S}$, see an example in Figure~\ref{fig:ldps}(a).

\begin{figure}[h!]
\centerline{\includegraphics[scale=0.8]{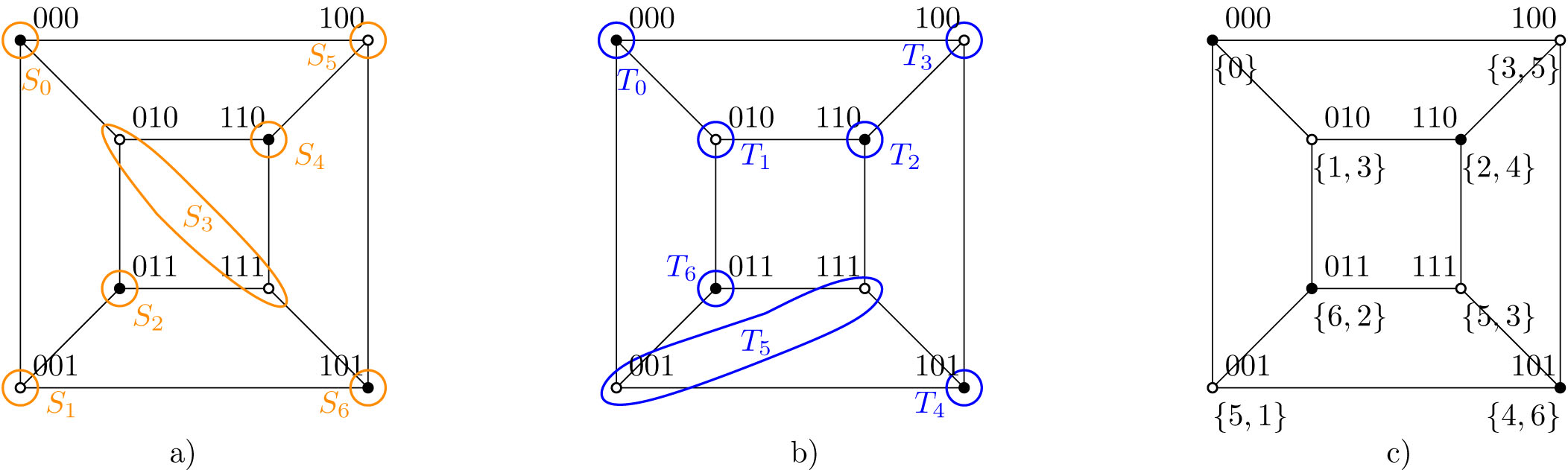}}
\caption{An example of two level-disjoint partitions of the hypercube $Q_3$ rooted in a vertex $v$. a) A level partition $\mathcal{S}=(\{000\},\{001\},\{011\},\{010,111\},\{110\},\{100\},\{101\})$. b) A level partition $\mathcal{T}=(\{000\},\{010\},\{110\},\{100\},\{101\},\{001,111\},\{011\})$. c) $R_{\mathcal{S}, \mathcal{T}}(u)$ of each vertex $u$.}\label{fig:ldps}
\end{figure}

This definition is motivated as follows. Starting at all vertices from the level $S_0$, at each step the same message is sent from all vertices of the current level to all vertices in the next level through edges of the graph. Thus $h$ is the number of steps needed to visit all vertices. Note that we abstract from which particular edges are used but the condition \eqref{eq:ldp} admits to choose edges so that each vertex (except the starting vertices) is informed by precisely one neighbor from a previous level (1-in port model). In this way, a message does not wait at vertices before it is sent further (no-buffer model), and since $\mathcal{S}$ is a partition of $V(G)$, no message returns to a previously visited vertex (non-repeating model).

Note that for the sake of generality we allow the same message to start at more than one vertex. In a particular case when the starting level $S_0$ is a singleton, say $S_0=\{v\}$, we say that the level partition is \emph{rooted} at $v$ (or $v$-\emph{rooted}) and the vertex $v$ is called the \emph{root} of $\mathcal{S}$. Rooted level partitions can be obtained, for example, from rooted spanning trees by taking the set of vertices at distance $i$ from the root $v$ in the spanning tree as the $i$-th level of the partition.

We are interested in a scenario when more messages should be broadcasted simultaneously with the requirement that they never arrive to the same vertex at the same time. This motivates the following definitions. Two level partitions $\mathcal{S}=(S_0,\dots,S_{h(\mathcal{S})})$ and $\mathcal{T}=(T_0,\dots,T_{h(\mathcal{T})})$ are said to be \emph{level-disjoint} if $S_i \cap T_i=\emptyset$ for every $1\le i \le \min(h(\mathcal{S}),h(\mathcal{T}))$. See an example on Figure~\ref{fig:ldps}(a) and (b). Note that we allow $S_0 \cap T_0 \neq \emptyset$ since we consider also the case when different messages have the same source (or overlapping sources). Level partitions $\mathcal{S}^1, \dots, \mathcal{S}^k$ are said to be (mutually) level-disjoint if each two partitions are level-disjoint. Then we say that $\mathcal{S}^1, \dots, \mathcal{S}^k$ are \emph{level-disjoint partitions}, shortly \emph{LDPs}. If every partition is rooted in the same vertex $v$ and they are level-disjoint (up to the starting level $\{v\}$), we say that $\mathcal{S}^1, \dots, \mathcal{S}^k$ are \emph{level-disjoint partitions with the same root $v$}, shortly \emph{$v$-rooted LDPs}.

Let $\mathcal{S}^1, \dots, \mathcal{S}^k$ be level partitions of $G$, not necessarily level-disjoint. The set of levels in which a given vertex $u$ occurs is denoted by $R_{\mathcal{S}^1, \dots, \mathcal{S}^k}(u)$; that is,
$$R_{\mathcal{S}^1, \dots, \mathcal{S}^k}(u)=\{l \mid u\in S_l^i\text{ for some }1\le i \le k\}.$$
We omit level partitions from the index of $R_{\mathcal{S}^1, \dots, \mathcal{S}^k}(u)$ if no ambiguity may arise. For an illustration see Figure~\ref{fig:ldps}(c).

Now, assume that $\mathcal{S}^1, \dots, \mathcal{S}^k$ have a common root $v$. Then clearly $R(v)=\{0\}$. Furthermore, they are level-disjoint if and only if $|R(u)|=k$ for every vertex $u$ except the root $v$. Finally, observe that if $G$ is bipartite, then for every vertex $u$ all elements (levels) in $R(u)$ have the same parity. More precisely, they are all even if $u$ and $v$ are from the same bipartite class; otherwise they are all odd.

The number of level-disjoint partitions determines how many messages can be broadcasted simultaneously while their maximal height determines the overall time of the broadcasting. Hence a general aim is to construct for a given graph
\begin{itemize}
    \item  as many as possible (mutually) level-disjoint partitions; and
    \item  with as small maximal height as possible.
\end{itemize}

Level-disjoint partitions of prescribed number are studied into the details in Section~\ref{sec:number}, while the optimality of their height in Section~\ref{sec:height}.

In this paper we consider only simple connected undirected graphs. Let $v$ be a vertex of a graph $G$. We denote by $G-v$ the graph obtained by removing $v$ and all incident edges from $G$. If a subgraph $H \subseteq G$ does not contain $v$, we denote by $H+v$ the subgraph of $G$ obtained by adding $v$ and all incident edges from $G$ to $H$. If $G-v$ is disconnected, the vertex $v$ is a \emph{cut-vertex}. A \emph{bridge} of $G$ is an edge whose removal disconnects $G$. A maximal subgraph without a cut-vertex is called a \emph{block}. Clearly, every block is $2$-connected, formed by a bridge, or an isolated vertex.

A \emph{path} is a (nonempty) sequence $P=(v_1,\dots,v_n)$ of distinct vertices with edges between consecutive vertices. The length of $P$ is $|P|=n-1$. We use notation $v_1Pv_n$ instead of $P$ if we need to point out the \emph{endvertices} $v_1,v_n$ of $P$. For vertex-disjoint (up to $v$) paths $uPv$ and $vP'w$ we denote by $(uPv, vP'w)$ their \emph{concatenation}. The \emph{reverse} of $P$ is the path $P^R=(v_n,\dots,v_1)$. Similarly, a \emph{cycle} is a sequence $C=(v_1,\dots,v_n)$ of distinct vertices up to $v_1=v_n$ with edges between consecutive vertices. The length of $C$ is $|C|=n-1$.
A \emph{girth} of $G$ is the length of a shortest cycle in $G$. The \emph{eccentricity} of $v$, denoted by $\ecc(v)$, is the maximal distance from $u$ to all vertices.

\section{Prescribed number of level-disjoint partitions}\label{sec:number}

In simultaneous broadcasting one often needs to send out at the same time as many messages as possible without limitation on overall time for a such task. In our communication model this scenario leads to the following question.
\begin{question}\label{q:numberOfLDPs}
Given a graph $G$, $v\in V(G)$, and $k\ge 1$, are there $k$ LDPs of $G$ rooted in $v$?
\end{question}

If we consider a problem of simultaneous broadcasting of $k$ messages from a single vertex $v$, in our setting, this corresponds to finding $k$ level-disjoint partitions of $G$ with the same root $v$. The obvious necessary condition on the number of $v$-rooted LDPs is as follows.

\begin{obs}\label{prop:c1c2} Let $\mathcal{S}^1,\dots,\mathcal{S}^{k}$ be level-disjoint partitions of a graph $G$ with the same root $v$. Then,
\begin{gather}
k \leq \deg(v).\label{eq:deg}
\end{gather}
\end{obs}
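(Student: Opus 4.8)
The plan is to prove this directly from the definitions of level-disjoint partitions with a common root, using a counting argument at the vertices adjacent to $v$. The key observation is that each of the $k$ partitions $\mathcal{S}^i = (S_0^i, \dots, S_{h_i}^i)$ has $S_0^i = \{v\}$, and by the defining condition \eqref{eq:ldp} the first level $S_1^i$ must satisfy $S_1^i \subseteq N(S_0^i) = N(v)$. Thus every first-level set is a nonempty subset of the neighborhood of $v$.

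First I would argue that each $S_1^i$ is nonempty: if some $S_1^i$ were empty, then since $S_1^i \subseteq N(S_0^i)$ and every subsequent level must have a neighbor in the previous level, the partition could not cover any vertex beyond $v$, contradicting that $\mathcal{S}^i$ is a partition of all of $V(G)$ (assuming, as the paper does, that $G$ is connected with more than one vertex; the degenerate single-vertex case makes \eqref{eq:deg} trivial). Hence for each $i$ we may pick a vertex $u_i \in S_1^i \subseteq N(v)$.

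Next I would use the level-disjointness to show these witnesses are distinct. Since the partitions share the root $v$, level-disjointness means $S_\ell^i \cap S_\ell^j = \emptyset$ for all $i \neq j$ and all $\ell \geq 1$. In particular $S_1^i \cap S_1^j = \emptyset$ for $i \neq j$, so the chosen vertices $u_1, \dots, u_k$ are pairwise distinct. This gives $k$ distinct neighbors of $v$, so $k \leq |N(v)| = \deg(v)$, which is exactly \eqref{eq:deg}.

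I do not anticipate a serious obstacle here, as the statement is essentially an immediate consequence of the setup; the only point requiring care is the nonemptiness of the first levels, which is where the connectivity assumption and the partition property are genuinely used. One could phrase the whole argument even more compactly by noting that the map sending each partition index $i$ to its first level $S_1^i$ yields $k$ pairwise-disjoint nonempty subsets of $N(v)$, so $k$ cannot exceed $|N(v)| = \deg(v)$.
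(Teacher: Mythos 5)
Your proof is correct and follows essentially the same route as the paper's: both arguments observe that the first levels $S_1^1,\dots,S_1^k$ are pairwise disjoint subsets of $N(v)$, forcing $k\le\deg(v)$. Your only addition is the explicit check that each $S_1^i$ is nonempty (via connectedness and the partition property), a point the paper leaves implicit; this is a harmless and slightly more careful rendering of the same argument.
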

\begin{proof}
Note that the number of level-disjoint partitions $\mathcal{S}^1,\dots,\mathcal{S}^{k}$ starting in $v$ is at most $\deg(v)$ since $S_1^1, \ldots, S_1^k \subseteq N(v)$ and $S_1^i \cap S_1^j = \emptyset$ for every $i \neq j$. Hence \eqref{eq:deg} holds.
\end{proof}

We are interested in cases when the above bound is tight. If equality holds in \eqref{eq:deg} we say there is \emph{optimal number} of LDPs rooted in $v$.

If $v$ is a cut-vertex of $G$ and $G$ has $k$ LDPs rooted in $v$, then their restriction to each component $C_i$ of $G-v$ forms $k$ LDPs of $C_i+v$ rooted in $v$. Hence, $k \le \min_{i} \deg_{C_i+v}(v)$ where $\deg_G(v)=\sum_i \deg_{C_i+v}(v)$.
On the other hand, if there are $k$ $v$-rooted LDPs of $C_i+v$  for each component $C_i$ of $G-v$, they can be composed by taking unions of same levels into $k$ $v$-rooted LDPs of $G$. Hence the above question can be considered for each component of $G-v$ separately.

Furthermore, the following lemma shows that it suffices to find LDPs locally ``around'' the root $v$ on some suitable subgraph $H$ of $G$. Then they can be extended to LDPs with the same root to the whole graph $G$. Since $v$ could be a cut-vertex of $G$, we need that $H$ meets each component of $G-v$.

\begin{lem}\label{lem:LDPdk}
Let $v$ be a vertex of a graph $G$ and $H$ be a subgraph of $G$ containing $v$ and some vertex from each component of $G-v$. Then any $k$ $v$-rooted level-disjoint partitions of $H$ can be extended to $k$ $v$-rooted level-disjoint partitions of $G$.
\end{lem}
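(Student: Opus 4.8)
The plan is to build $G$ up from $H$ by adjoining the remaining vertices one at a time, maintaining at each stage a valid family of $k$ $v$-rooted level-disjoint partitions of the current subgraph. Write $\mathcal{S}^1,\dots,\mathcal{S}^k$ for the given $v$-rooted LDPs of $H$, and for a vertex $u\neq v$ and index $j$ let $\ell^{j}(u)$ denote the index of the level of $\mathcal{S}^{j}$ containing $u$. Recall from the discussion preceding the lemma that level-disjointness with common root $v$ is equivalent to $|R(u)|=k$ for every $u\neq v$; in other words, for each fixed $u\neq v$ the $k$ levels $\ell^{1}(u),\dots,\ell^{k}(u)$ are pairwise distinct. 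This elementary fact is the engine of the whole argument.

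First I would fix an ordering $u_1,\dots,u_m$ of $V(G)\setminus V(H)$ obtained by treating each component $C_i$ of $G-v$ separately and running a graph search inside $C_i$ started from the set $C_i\cap V(H)$, which is nonempty by hypothesis. This ordering has the crucial property that every $u_t$ possesses a neighbour in $V(H)\cup\{u_1,\dots,u_{t-1}\}$ lying in the same component $C_i$ of $G-v$, and hence different from $v$. Set $H_0=H$ and $H_t=H_{t-1}+u_t$, so that $H_m=G$.

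The inductive step is then immediate. Assume $\mathcal{S}^1,\dots,\mathcal{S}^k$ are $v$-rooted LDPs of $H_{t-1}$, and let $w\neq v$ be a neighbour of $u_t$ already present in $H_{t-1}$. I would place $u_t$ into level $\ell^{j}(w)+1$ of $\mathcal{S}^{j}$ for every $j$, deliberately using the \emph{same} vertex $w$ in all $k$ partitions. Each $\mathcal{S}^{j}$ remains a level partition of $H_t$: the new vertex $u_t$ has the neighbour $w$ on the preceding level, no other membership changes, and since $\ell^{j}(w)\le h(\mathcal{S}^{j})$ the levels stay contiguous (the height grows by one only when $w$ sits on the top level). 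Level-disjointness is preserved because $u_t$ is the only newly placed vertex and its $k$ new levels $\ell^{1}(w)+1,\dots,\ell^{k}(w)+1$ are pairwise distinct; this is exactly where $w\neq v$ enters, through the observation above. Iterating to $t=m$ yields $k$ $v$-rooted LDPs of $G$ restricting to the original ones on $H$.

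The only delicate point, and the reason for the hypothesis that $H$ meets every component of $G-v$, is guaranteeing at each step an already-placed attaching neighbour $w$ different from the root. If some component $C_i$ avoided $H$ entirely, the first vertex drawn from $C_i$ would be forced to attach through $v$, and placing it on level $1$ in all $k$ partitions at once would collapse level-disjointness, since $v$ occupies level $0$ in every partition. The componentwise search described above is precisely what rules this out, so I expect the bookkeeping of the ordering — rather than the verification of the level and disjointness conditions, both of which are routine — to be the heart of the proof.
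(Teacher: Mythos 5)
Your proposal is correct and follows essentially the same route as the paper: both proofs extend the partitions one vertex at a time, attaching each new vertex $u$ via a neighbor $w\neq v$ already covered (guaranteed by $H$ meeting every component of $G-v$) and placing $u$ at level $\ell^{j}(w)+1$ in $\mathcal{S}^{j}$, with level-disjointness preserved precisely because the levels of $w$ are pairwise distinct. The only difference is presentational — you fix a componentwise search ordering in advance, while the paper simply argues at each step that a suitable uncovered vertex with a non-root covered neighbor exists — which amounts to the same argument.
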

\begin{proof}
Let $\mathcal{S}^1,\ldots,\mathcal{S}^k$ be $v$-rooted level-disjoint partitions of $H$. If $V(H)=V(G)$ we are done, as they are $v$-rooted LDPs of $G$ as well. Now assume $V(H)\subsetneq V(G)$. It suffices to show that they can be extended to $v$-rooted level-disjoint partitions of $H'=H+u$
for some vertex $u$ of $G$ uncovered by $H$. Details are provided in the next paragraph. Then, by incremental extension until no uncovered vertex remains, we obtain $k$ $v$-rooted level-disjoint partitions of $G$.

Let $u$ be a vertex of $G$ that is not in $H$ but has a neighbor $w$ in $H$ distinct from $v$. Note that such vertex $u$ exists since $H$ contains some vertex from each component of $G-v$. Let us denote by $l_i$ the level of $w$ in $\mathcal{S}^i$; that is, $w\in S^i_{l_i}$ for every $1\le i \le k$. Then, we extend $\mathcal{S}^1,\ldots,\mathcal{S}^k$ to $H'$ by adding $u$ to the $(l_i+1)$-th level of $\mathcal{S}^i$ for every $1\leq i \leq k$.  Clearly, such extended partitions are level partitions of $H'$. Moreover, they are level-disjoint since $u$ was added into distinct levels of level-disjoint partitions $\mathcal{S}^1,\ldots,\mathcal{S}^k$.
\end{proof}

For $k=1$ the answer to the Question~\ref{q:numberOfLDPs} is trivial, since there is a level partition $(S_0,\dots,S_h)$ of $G$ starting in $v$ with $S_i=\{u\in V(G) \mid d_G(u,S_0)=i\}$ for every $0\le i \le h$, which is also of the optimal height. For $k=2$, it is easy to see that odd cycles have two LDPs with the same root whereas even cycles do not. For a full characterisation of graphs $G$ admitting two LDPs rooted in $v$, we need the following definitions.

A cycle containing a vertex $v$ is called a \emph{$v$-cycle}. Let us denote by $\overline{v}_C$ the opposite vertex to $v$ on an even cycle $C$. We say that a path $uPw$ is \textit{chordal} to a cycle $C$ if $V(P) \cap V(C) = \{ u, w \}$. Then we write $C=(vAu,uDw,wBv)$ to denote the subpaths $A,D,B$ of $C$ between respective vertices. We say that a chordal path $uPw$ to a cycle $C$ \textit{separates} $x,y \in V(C)$ if $x$ and $y$ belong to different subpaths of $C - \{u, w\}$. For an illustration see Figure~\ref{fig:chordal}(a).

\begin{figure}[h!]
\centerline{\includegraphics[scale=0.7]{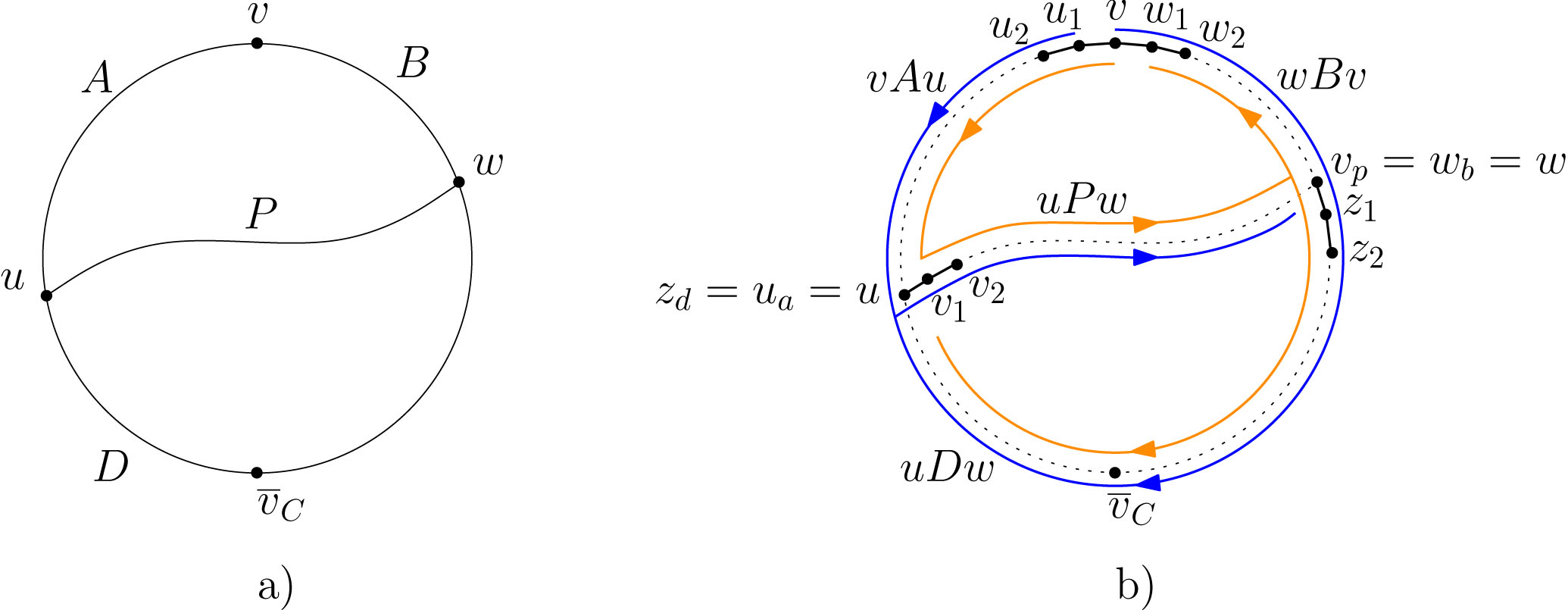}}
\caption{a) A $v$-cycle $C=(vAu,uDw,wBv)$ with a chordal path $uPw$ that separates $v$ and $\overline{v}_C$. b) A construction of two level-disjoint partitions $\mathcal{S}$, $\mathcal{T}$ of $C \cup P$ rooted at $v$.}\label{fig:chordal}
\end{figure}

\begin{obs}\label{obs:chordal}
Let $uPw$ be a chordal path to a $v$-cycle $C=(vAu,uDw,wBv)$, $a=|A|$, $b=|B|$, $d=|D|$ and assume that $a \geq b$. Then $P$ separates $v$, $\overline{v}_C$ if and only if $a< b+d$.
\end{obs}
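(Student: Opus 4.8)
The plan is to translate the combinatorial notion of separation into a distance calculation along the cycle $C$. First I would record that, since $\overline{v}_C$ is defined, $C$ must be an even cycle, so its length $a+b+d$ is even and $\overline{v}_C$ is the unique vertex lying at distance $(a+b+d)/2$ from $v$ along $C$. I would also note that $d \ge 1$, because the endvertices $u$ and $w$ of a chordal path are distinct.

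Next I would observe that deleting $\{u,w\}$ from $C$ produces exactly two arcs: the open interior of $D$, which does not contain $v$, and the complementary arc running from $u$ through $v$ to $w$ (formed by $A$ and $B$), which does contain $v$. By the definition of separation, $P$ separates $v$ and $\overline{v}_C$ exactly when these two vertices lie in different arcs; since $v$ always lies in the second arc, this is equivalent to $\overline{v}_C$ lying in the interior of $D$.

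The heart of the argument is then a short position computation. Parametrizing $C$ by arc length from $v$ in the direction of $A$, the vertex $u$ sits at distance $a$, $w$ at distance $a+d$, and $\overline{v}_C$ at distance $(a+b+d)/2$. Thus $\overline{v}_C$ lies strictly inside $D$ if and only if $a < (a+b+d)/2 < a+d$. The left inequality simplifies to $a < b+d$, which is precisely the condition in the statement, while the right inequality simplifies to $b < a+d$ and holds automatically since $b \le a < a+d$ by the assumptions $a \ge b$ and $d \ge 1$. This collapses the whole equivalence to the single condition $a < b+d$.

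The one point requiring care — and the only place I expect real friction — is the boundary behaviour at $a = b+d$, where the computed position of $\overline{v}_C$ coincides with $u$. There $\overline{v}_C$ is an endvertex of the chord rather than an interior point of either arc, so $P$ does not separate $v$ and $\overline{v}_C$, consistent with the \emph{strict} inequality $a < b+d$. I would therefore make explicit that the assumption $a \ge b$ (together with $d \ge 1$) is exactly what forces the second inequality to hold for free, so that the stated criterion captures the equivalence without further case analysis.
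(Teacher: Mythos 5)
Your proof is correct and follows essentially the same route as the paper: both identify $\overline{v}_C$ as the vertex at distance $\tfrac{a+b+d}{2}$ from $v$ along $C$, reduce separation to the position condition $a < \tfrac{a+b+d}{2} < a+d$, and observe that the second inequality holds automatically from $a \ge b$ and $d \ge 1$. Your explicit treatment of the boundary case $a = b+d$ and of the two arcs of $C - \{u,w\}$ is a slightly more careful write-up of the same argument, not a different one.
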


\begin{proof}
Observe that $P$ separates $v$, $\overline{v}_C$ if and only if by moving from $v$ along the cycle $C$ we meet $\overline{v}_C$ after $u$ and before $w$. Since $\overline{v}_C$ is at distance $\frac{a+b+d}{2}$ from $v$ on $C$, this happens if and only if $a < \frac{a+b+d}{2} < a+d$. The second inequality holds by $d>0$ and the assumption $a\ge b$. Hence, $P$ separates $v$, $\overline{v}_C$ if and only if $a< b+d$.
\end{proof}

Let $P_1=(u_1,u_2,\dots,u_m)$, $P_2=(w_1,w_2,\dots,w_n)$ be paths with $u_1=w_n$ and $w_1=u_m$. Then $P_1$ and $P_2^R$ have a common prefix and a common suffix, we say that $P_1$, $P_2$ are \emph{merged}. Furthermore, if all common vertices of $P_1$, $P_2^R$ are in a common prefix or in a common suffix, we say that $P_1$, $P_2$ are \emph{fully-merged}; that is, they can be split into three subpaths $P_1=(A,P,B)$, $P_2=(B^R,D^R,A^R)$ with $P,D$ possibly being both empty such that $A$ is the longest common prefix of $P_1,P_2^R$, $B$ is the longest common suffix of $P_1,P_2^R$, and $P,D$ are inner vertex-disjoint.

\begin{lem}\label{lem:fully}
Let $P_1=(A,P,B)$, $P_2=(B^R,D^R,A^R)$, where $|P|\le|D|$, be fully-merged level-disjoint paths in a bipartite graph $G$ such that their endvertices have a common neighbor $v$ not belonging to $P_1 \cup P_2$. Then $P$ is a chordal path on the cycle $C=(v,A,D,B,v)$ separating $v, \overline{v}_C$.
\end{lem}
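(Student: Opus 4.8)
The plan is to establish the two parts of the conclusion separately: that $P$ is chordal to $C$, and that it separates $v$ from $\overline{v}_C$. Throughout I write $A=(a_0,\dots,a_{|A|})$, $D=(d_0,\dots,d_{|D|})$, $B=(b_0,\dots,b_{|B|})$ with $u=a_{|A|}=d_0$ and $w=d_{|D|}=b_0$, so that $u,w$ are the shared endpoints of $P$ and $D$, and $v$ is adjacent to the two endvertices $a_0,b_{|B|}$ of $P_1,P_2$. I regard $P_1,P_2$ as the $v$-rooted level partitions $\mathcal S,\mathcal T$, the level of a vertex being its position (distance from $v$) along the corresponding path; level-disjointness then says that no vertex lying on both paths occupies the same level in $\mathcal S$ and in $\mathcal T$. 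For chordality I would show $V(P)\cap V(C)=\{u,w\}$: the endpoints $u,w$ lie on $C$ by construction, while an inner vertex of $P$ is an inner vertex of the path $P_1=(A,P,B)$, hence distinct from every vertex of $A$ and of $B$; it is distinct from every vertex of $D$ because $P,D$ are inner vertex-disjoint; and it differs from $v$ since $v\notin P_1\cup P_2$. As $V(C)=\{v\}\cup V(A)\cup V(D)\cup V(B)$, this gives $V(P)\cap V(C)=\{u,w\}$.

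For the separation, note first that since $G$ is bipartite the cycle $C$ is even, so $\overline{v}_C$ is well defined, lying at distance $\tfrac12|C|=\tfrac12(|A|+|B|+|D|)+1$ from $v$ along $C$. Travelling from $v$ along the $A$-side we reach $u$ at distance $|A|+1$ and $w$ at distance $|A|+|D|+1$, so $\overline{v}_C$ lies strictly inside the $D$-arc of $C$ — equivalently $P$ separates $v,\overline{v}_C$ — exactly when $|A|+1<\tfrac12(|A|+|B|+|D|)+1<|A|+|D|+1$, that is,
\[ |A| < |B| + |D| \quad\text{and}\quad |B| < |A| + |D|. \]
(This is the same criterion as Observation~\ref{obs:chordal}.) It remains to extract these two inequalities from level-disjointness.

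This is the core of the argument. Consider the vertices of $A$: the vertex $a_i$ sits at level $i+1$ in $\mathcal S$ and at level $|A|+|B|+|D|-i+1$ in $\mathcal T$, and these coincide precisely when $i=\tfrac12(|A|+|B|+|D|)$, which is an integer by bipartiteness. Level-disjointness forbids such an index in $\{0,\dots,|A|\}$, and since the index is nonnegative this forces $\tfrac12(|A|+|B|+|D|)>|A|$, i.e.\ $|A|<|B|+|D|$. Symmetrically, the vertex $b_j$ sits at level $|A|+|P|+j+1$ in $\mathcal S$ and at level $|B|-j+1$ in $\mathcal T$, coinciding when $j=\tfrac12(|B|-|A|-|P|)$, again an integer. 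This crossing index cannot exceed $|B|$, so level-disjointness forces it to be negative, giving $|B|<|A|+|P|$; the hypothesis $|P|\le|D|$ then upgrades this to $|B|<|A|+|D|$. Both required inequalities now hold, so $P$ separates $v,\overline{v}_C$.

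The main obstacle, and the place where the hypotheses genuinely enter, is this last upgrade: level-disjointness by itself yields only the weaker bound $|B|<|A|+|P|$ (with $|P|$, not $|D|$), and it is precisely $|P|\le|D|$ that promotes it to $|B|<|A|+|D|$. One must also take care that bipartiteness is what guarantees the two ``crossing indices'' are integers — without this, level-disjointness would be vacuous at those positions and the inequalities could fail. By comparison, the chordality step and the translation of separation into the length inequalities are routine.
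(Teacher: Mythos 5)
There is one genuine gap: you silently assume that $P$ is a nonempty path with distinct endpoints $u\neq w$, but the paper's definition of \emph{fully-merged} explicitly allows $P$ and $D$ to be \emph{both empty} (i.e.\ $P_2=P_1^R$), and nothing in the lemma's hypotheses rules this out. In that degenerate case there is no chordal path at all, so your chordality paragraph (``an inner vertex of $P$ is an inner vertex of $P_1$\dots'') and your separation argument (which needs $C=(v,A,D,B,v)$ to be a genuine cycle with two distinct arcs between $u$ and $w$) do not even parse. The paper devotes the first part of its proof precisely to this: if $P$ and $D$ were both empty, then $(v,P_1)$ and $(v,P_2)$ would be two $v$-rooted LDPs of the even cycle $(v,A,B,v)$, contradicting the earlier observation that even cycles admit no two same-rooted LDPs (one of $P,D$ empty forces the other empty, since otherwise $P_1$ or $P_2$ repeats a vertex). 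Note that your own machinery would close the hole without the even-cycle fact: with $|P|=|D|=0$ your two crossing-index computations remain valid and read $|A|<|B|$ and $|B|<|A|$, a contradiction --- but you must actually say this (or argue nonemptiness as the paper does) before speaking of inner vertices of $P$ and of $\overline{v}_C$.

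Apart from this, your proof is correct and its computations check out (levels $i+1$ and $|A|+|B|+|D|-i+1$ for $a_i$; $|A|+|P|+j+1$ and $|B|-j+1$ for $b_j$; for the integrality of the second crossing index you need $|A|+|B|+|P|$ even, which follows since both endvertices of $P_1$ are neighbors of $v$ and hence in the same bipartition class), and it takes a route that differs from the paper's in an interesting way. The paper assumes $a\ge b$ ``by reversing $P_1,P_2$'', uses only the $A$-side crossing index to get $a<b+d$, and then invokes Observation~\ref{obs:chordal}; you avoid the reversal altogether, extract \emph{both} inequalities --- $|A|<|B|+|D|$ from the $A$-side and $|B|<|A|+|P|\le|A|+|D|$ from the $B$-side, the latter being the only place $|P|\le|D|$ enters, exactly as you stress --- and re-derive the symmetric form of the separation criterion directly. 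This symmetric variant is arguably more robust: reversing two level-disjoint paths need not preserve level-disjointness when $|P_1|\ne|P_2|$ (after reversal a shared vertex collides precisely when its level difference equals $|P_1|-|P_2|=|P|-|D|$), so the paper's WLOG step is itself only licensed by an argument equivalent to your $B$-side inequality; your version makes the role of the hypothesis $|P|\le|D|$ fully explicit instead of hiding it in the reversal. So: fix the degenerate case, and the rest stands.
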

\begin{proof}Since $P_1$, $P_2$ are level-disjoint, also the paths $P'_1=(v,P_1)$, $P'_2=(v,P_2)$ are level-disjoint. We claim that both $P$ and $D$ are nonempty. Clearly, if one of $P$, $D$ is empty, then the other is empty as well, for otherwise $P_1$ or $P_2$ is not a path. If both $P$ and $D$ are empty, then $P'_1$, $P'_2$ form two $v$-rooted LDPs of a cycle $C=(v,A,B,v)$. But this is impossible, since $C$ is even by the assumption that $G$ is bipartite, and even cycles do not have two same-rooted LDPs as we observed above. Hence, the claim holds, so $P$ is a chordal path on a cycle $C=(v,A,D,B,v)$.

What remains is to show that $P$ separates $v$, $\overline{v}_C$ on $C$. Let $a=|A|+1$, $b=|B|+1$, $d=|D|$. Note that we assume above that $d\ge |P|$ and we may also assume that $a\ge b$ for otherwise we can reverse paths $P_1$, $P_2$. Since $P'_1=(v,A,P,B)$, $P'_2=(v,B^R,D^R,A^R)$ are level-disjoint, every vertex of $A=(u_1,\dots,u_a)$ has distinct levels in $P'_1$, $P'_2$. More precisely, the vertex $u_i$ in the $i$-th level on $P'_1$ and in the $(a+b+d-i)$-th level on $P'_2$. Note that $a+b+d=|C|$ is even since $G$ is bipartite. Suppose for a contradiction that $a\ge b+d$. Then $i=\frac{a+b+d}{2}\le a$, so $u_i$ belongs to $A$. But this implies that $u_i$ is in the $i$-th level of both $P'_1$ and $P'_2$ contrary to their level-disjointness. Therefore $a<b+d$, and by Observation~\ref{obs:chordal} it follows that $P$ separates $v, \overline{v}_C$ on $C$.
\end{proof}

Now we are ready to characterize graphs admitting two level-disjoint partitions rooted in a given vertex.

\begin{thm}\label{thm:k=2}
Let $v$ be a vertex in a graph $G$. Then $G$ has two level-disjoint partitions rooted in $v$ if and only if for every block $B$ of $G$ containing $v$ it holds that
\begin{enumerate}[\hspace{0.5cm}$(a)$]
  \item $B$ is $2$-connected, and
  \item $B$ is non-bipartite or $B$ has a $v$-cycle $C$ with a chordal path that separates $v$ and $\overline{v}_C$.
\end{enumerate}
\end{thm}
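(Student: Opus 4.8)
The plan is to reduce to a single block through $v$ and then to settle the $2$-connected case by an explicit construction (sufficiency) together with an extraction of paths feeding Lemma~\ref{lem:fully} (necessity). First I would show that $G$ has two $v$-rooted LDPs if and only if every block $B$ containing $v$ does. For the backward direction, the blocks through $v$ pairwise meet only in $v$ and jointly meet every component of $G-v$, so taking level-wise unions of two $v$-rooted LDPs of each block and applying Lemma~\ref{lem:LDPdk} produces two $v$-rooted LDPs of $G$. For the forward direction the key observation is that if $c$ is a cut-vertex and $z$ lies in a component of $G-c$ avoiding $v$, then in any $v$-rooted level partition the level of $z$ exceeds that of $c$ (the lowest-level vertex of that component must attach directly to $c$). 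Hence for $u\in B\setminus\{v\}$ every neighbour of $u$ outside $B$ has strictly larger level, so the neighbour realising the level of $u$ lies in $B$; thus restricting two $v$-rooted LDPs of $G$ to $B$ yields two $v$-rooted LDPs of $B$.

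It remains to treat a single block $B$ through $v$. If $B$ is a bridge $vw$ (or the lone vertex $v$), restriction forces $w$ into level $1$ of both partitions, which is impossible; this is exactly the failure of $(a)$, and it is the only way $(a)$ can fail. So assume $B$ is $2$-connected, where $(a)$ holds, and it suffices to prove that two $v$-rooted LDPs exist iff $(b)$ holds. For sufficiency I would treat the two clauses of $(b)$. If $B$ is non-bipartite, a fan argument (two internally disjoint $v$-paths to an odd cycle) gives an odd $v$-cycle $C$; reading $C$ in its two orientations as singleton levels gives two $v$-rooted LDPs of $C$, level-disjoint because $|C|$ is odd, which extend to $B$ by Lemma~\ref{lem:LDPdk}. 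If $B$ is bipartite with a $v$-cycle $C=(vAu,uDw,wBv)$ and a chord $uPw$ separating $v,\overline{v}_C$, set $a=|A|\ge b=|B|$, $d=|D|$, $p=|P|$, so $a<b+d$ by Observation~\ref{obs:chordal}, $a+b+d$ is even, and $p\ge1$. I would then define $\mathcal{S}$ to run along $A,P,B$ and to cover the interior of $D$ growing from $w$, and $\mathcal{T}$ to run along $B,D,A$ and to cover the interior of $P$ growing from $u$; writing out the levels, the only potential collisions (at $u$, at $w$, along $A$, and along $D$) are excluded precisely by $a<b+d$ and $a+p>b$. These two LDPs of $C\cup P$ extend to $B$ by Lemma~\ref{lem:LDPdk}.

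Necessity is the substantial direction: from two $v$-rooted LDPs $\mathcal{S},\mathcal{T}$ of a bipartite $2$-connected $B$ I must produce a $v$-cycle with a separating chord, and the tool is Lemma~\ref{lem:fully}. The usable fact is that for any vertex $z$ the two backward paths $\pi_\mathcal{S}(z),\pi_\mathcal{T}(z)$ (following decreasing levels from $z$ to $v$) are automatically level-disjoint, since a common vertex occupying the same position would lie in some $S_i\cap T_i$; moreover, being of equal parity but distinct length, their union is never a single chordless cycle. The main obstacle is to convert this length discrepancy into the fully-merged pattern required by Lemma~\ref{lem:fully}, namely two level-disjoint paths $(A,P,B)$ and $(B^R,D^R,A^R)$ between two neighbours $x\in S_1$, $x'\in T_1$ of $v$. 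I would do this by tracking the common vertices of $\pi_\mathcal{S}(z)$ and $\pi_\mathcal{T}(z)$: if for every $z$ these occur in the same order along both paths, the two partitions are nested and a parity/length count on the resulting lenses reproduces the even-cycle obstruction, contradicting the existence of two LDPs; otherwise the first order reversal isolates a single bubble hanging on $v$ through two $v$-neighbours, and an extremal (shortest-total-length) choice of such a crossing removes any superfluous shared interior. Feeding this fully-merged pair into Lemma~\ref{lem:fully} yields the chord $P$ on $C=(v,A,D,B,v)$ separating $v$ and $\overline{v}_C$, finishing the proof.
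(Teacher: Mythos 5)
Your block reduction and both sufficiency constructions coincide with the paper's proof: the restriction argument via cut-vertex levels (which the paper only asserts), the bridge obstruction for $(a)$, the odd $v$-cycle traversed in its two orientations, and the two partitions of $C\cup P$ checked by the inequalities $a<b+d$ and $b<a+p$ are exactly the paper's $\mathcal{S},\mathcal{T}$ followed by Lemma~\ref{lem:LDPdk}. The genuine gap is in the necessity direction. Lemma~\ref{lem:fully} requires a \emph{merged} pair: level-disjoint paths $P_1$ from $x$ to $y$ and $P_2$ from $y$ to $x$ with $x,y\in N(v)$, i.e.\ swapped endpoints. Your pair $\pi_{\mathcal{S}}(z),\pi_{\mathcal{T}}(z)$ runs from the \emph{same} vertex $z$ to $v$, so it never has this pattern. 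The paper instead takes the backward $\mathcal{S}$-path of a neighbour $w_1\in T_1$ together with the backward $\mathcal{T}$-path of a neighbour $u_1\in S_1$ (with the endpoints chosen consistently), which is automatically merged and level-disjoint because positions along each path equal levels in the respective partition, and then converts it into a fully-merged pair by an explicit surgery: at the first common vertex $u_{l_1}=w_{l_2}$ lying in neither the common prefix nor the common suffix, reroute the path with the larger index through the other one (here $l_1\neq l_2$ by level-disjointness), verify that level-disjointness survives, and terminate because the common prefix or suffix strictly grows.

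Your substitute for this surgery --- the dichotomy ``either some $z$ exhibits an order reversal of the common vertices of $\pi_{\mathcal{S}}(z),\pi_{\mathcal{T}}(z)$, or all pairs are nested and nestedness contradicts the existence of two LDPs'' --- is false, and the paper's own Figure~\ref{fig:ldps} refutes it. Take the two $v$-rooted LDPs of $Q_3$ given there, with $\mathcal{S}$-levels $001\!:\!1$, $011\!:\!2$, $010,111\!:\!3$, $110\!:\!4$, $100\!:\!5$, $101\!:\!6$ and $\mathcal{T}$-levels $010\!:\!1$, $110\!:\!2$, $100\!:\!3$, $101\!:\!4$, $001,111\!:\!5$, $011\!:\!6$. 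An order reversal at $z$ would need two vertices with oppositely ordered $\mathcal{S}$- and $\mathcal{T}$-levels lying on \emph{both} backward paths of $z$; checking the few oppositely ordered pairs (all involve $001$, $011$ or $111$ as the low-$\mathcal{S}$ element, e.g.\ $001$ with levels $(1,5)$ against $010$ with levels $(3,1)$) against the backward-edge structure shows that no such pair ever occurs on both paths of any $z$, for \emph{any} choice of backward paths. So here every pair is nested, no ``first order reversal'' or ``bubble'' exists, and your claimed even-cycle contradiction in the nested case cannot be derived --- indeed no contradiction is available, since the two LDPs exist (and $Q_3$ does contain the required cycle with separating chordal path, e.g.\ $C=(000,001,101,100,110,010,000)$ with chordal path $(001,011,010)$). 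To close the proof you must produce the swapped-endpoint merged pair between two distinct neighbours of $v$ and run the suffix-replacement iteration (with its level-disjointness preservation and termination measure), as the paper does, before invoking Lemma~\ref{lem:fully}.
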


\begin{proof}
Let $C_1, \ldots, C_l$ be components of $G - v$ and let $B_1, \ldots, B_l$ be blocks of $G$ containing $v$ such that $B_i \subseteq C_i +v$ for every $1 \leq i \leq l$. First, to prove the implication from right to left (sufficiency), assume that every block $B_i$ is $2$-connected, and $B_i$ is non-bipartite or $B_i$ has a $v$-cycle $C$ with a chordal path that separates $v$ and $\overline{v}_C$.

Clearly, if every $B_i$ has two level-disjoint partitions rooted in $v$, they can be extended to two level-disjoint partitions of $C_i+v$ by Lemma~\ref{lem:LDPdk}, and then composed to two level-disjoint partitions of $G$. Hence, it is sufficient to find two $v$-rooted level-disjoint partitions in every $B_i$.

If $B_i$ is non-bipartite, we proceed as follows.
The vertex $v$ belongs to some odd cycle $C= (v=u_0, \ldots, u_{2k},v)$ in $B_i$, since in any $2$-connected non-bipartite graph, every vertex belongs to an odd cycle.
Then, it is easy to see that
\begin{align*}&(v=u_0, u_1, u_2, \ldots, u_k, u_{k+1}, \ldots, u_{2k-1},u_{2k}),\\
&(v=u_0, u_{2k}, u_{2k-1}, \ldots, u_{k+1}, u_{k}, \ldots, u_2, u_{1})\end{align*}
are level-disjoint paths and thus they form two $v$-rooted LDPs of $C$. By applying Lemma~\ref{lem:LDPdk} for a subgraph $C$ of $B_i$ we obtain two level disjoint partitions of $B_i$.

Assume now that the block $B_i$ is bipartite.
Let $C=(vAu,uDw,wBv)$ be a $v$-cycle in $B_i$ with a chordal path $uPw$ that separates $v$ and $\overline{v}_C$. For $a=|A|$, $b=|B|$, $d=|D|$, $p=|P|$ we may assume that $a \geq b$ and $d\geq p$, see Figure~\ref{fig:chordal}b), and let us denote the vertices in each subpath by
\begin{align*}A&=(v,u_1,u_2,\dots,u_a=u),&  B&=(w=w_b,w_{b-1},\dots, w_1,v),\\
D&=(u=z_d,z_{d-1},\dots,z_1,w),&  P&=(u,v_1,v_2,\dots, v_p=w).
\end{align*}
By Observation~\ref{obs:chordal}, it holds that $a< b+d$. We define level partitions $\mathcal{S}=(S_0,\dots,S_{h_S})$, $\mathcal{T}=(T_0,\dots,T_{h_T})$, where $h_S=\max(a+d+p-1, a+b+p-1)$ and $h_T=\max(a+b+d-1, b+d+p-1)$, of $C \cup P$ rooted at $v$ as follows:
\begin{displaymath}
S_i = \left\{ \begin{array}{ll}
\{v\} & \hbox{for }i=0,\\
\{u_i\} & \hbox{for }i=1, \ldots, a, \\
\{v_{i-a}\} & \hbox{for }i=a+1, \ldots, a+p, \\
\{z_{i-a-p}, w_{a+b+p-i}\} & \hbox{for }i=a+p+1, \ldots, h_S,
\end{array} \right.
\end{displaymath}

\begin{displaymath}
T_i = \left\{ \begin{array}{ll}
\{v\} & \hbox{for }i=0,\\
\{w_i\} & \hbox{for }i=1, \ldots, b, \\
\{z_{i-b}\} & \hbox{for }i=b+1, \ldots, b+d, \\
\{u_{a+b+d-i}, v_{i-b-d}\} & \hbox{for }i=b+d+1, \ldots, h_T.
\end{array} \right.
\end{displaymath}
In the above definitions if $u_j$, $v_j$, $w_j$, or $z_j$ is not defined, it is not included in $S_i$, $T_i$. (For example, if $b\le d$ then $h_S=a+d+p-1$ and $w_0$ is not included in $S_{a+b+p}$.)

The vertex $u_i$ is in different levels of $\mathcal{S}$, $\mathcal{T}$ since $a<b+d+1$. Similarly, $w_i$ is in different levels of $\mathcal{S}$, $\mathcal{T}$ since $b<a+p+1$. Furthermore, $z_i$ is in different levels of $\mathcal{S}$, $\mathcal{T}$ since $b<a+p$. Similarly, $v_i$ is in different levels of $\mathcal{S}$, $\mathcal{T}$ since $a < b+d$. Hence, $\mathcal{S}$ and $\mathcal{T}$ are level-disjoint. By Lemma~\ref{lem:LDPdk}, $\mathcal{S}$ and $\mathcal{T}$ can be extended to level-disjoint partitions of $B_i$.

Second, for the other implication (necessity), let us assume that $G$ has two level-disjoint partitions $\mathcal{S}$ and $\mathcal{T}$ and let $B$ be a block of $G$ containing the vertex $v$. Note that $B$ has two level-disjoint partitions $\mathcal{S}'$ and $\mathcal{T}'$ induced by $\mathcal{S}$ and $\mathcal{T}$, since every vertex $u \in V(B)$ at level $S_i$ ($T_i$) has some neighbor in $B$ at level $S_{i-1}$ (respectively $T_{i-1}$). Since $B$ has two level disjoint partitions with the starting level $\{v\}$, $B$ is $2$-connected.

We need to show that if $B$ is bipartite, then $B$ has a $v$-cycle $C$ with a chordal path that separates $v$ and $\overline{v}_C$. So assume that $B$ is bipartite. Since $B$ has level-disjoint partitions $\mathcal{S}'$ and $\mathcal{T}'$, there exist two level-disjoint paths $(v,w_n=u_1, u_2, \ldots, u_m=w_1)$, $(v,u_m=w_1, w_2, \ldots, w_n=u_1)$ from $v$ to its neighbors $w_1$, $u_1$ in $B$, respectively. Hence $P_1=(u_1,\dots,u_m)$, $P_2=(w_1,\dots, w_n)$ are level-disjoint as well and moreover merged. We claim that $P_1$, $P_2$ can be adjusted into fully-merged level-disjoint paths between the same vertices. We proceed by replacing a suffix of one path by a prefix of the other path as described by the following steps:

\begin{itemize}
  \item Assume that $P_1$, $P_2$ are not fully-merged yet. Then there is a vertex $u$ in $V(P_1) \cap V(P_2)$ that is neither in a common prefix nor in a common suffix of $P_1$, $P_2^R$. Let $l_1$ be the smallest index of such vertex $u$ in $P_1$ and let $l_2$ be the index of $u$ in $P_2$; that is, $u=u_{l_1}=w_{l_2}$ and $u_{l_1-1}\notin V(P_2)$. Clearly, $1<l_1<m$ and $1<l_2<n$ as $P_1$, $P_2$ are merged. Moreover, $l_1\ne l_2$ as $P_1$, $P_2$ are level-disjoint.
  \item If $l_1 < l_2$ then put $P_1' := P_1$ and $P_2' := (w_1, w_2, \ldots, w_{l_2}=u_{l_1}, u_{l_1-1}, \ldots, u_1=w_n)$.
  \item If $l_1 > l_2$ then put $P_1' := (u_1, u_2, \ldots, u_{l_1}=w_{l_2}, w_{l_2-1}, \ldots, w_1=u_n)$ and $P_2' := P_2$.
  \item Repeat the above steps for paths $P_1', P_2'$ instead of $P_1, P_2$ until they are fully merged.
\end{itemize}

We claim that $P'_1$, $P'_2$ are level-disjoint. If $l_1<l_2$ then the prefix $(w_1,w_2,\dots,w_{l_2})$ of $P'_2$ is level-disjoint with $P'_1=P_1$ since $P_1$, $P_2$ are level-disjoint. Moreover, the remaining vertices of $P'_2$; that is, $u_1,\dots,u_{l_1}$ are in levels up to $l_1$ in $P'_1$ and in levels from $l_2$ in $P'_2$. Hence, each vertex is in different levels on $P'_1$ and $P'_2$. If $l_1>l_2$ then the prefix $(u_1,u_2,\dots,u_{l_1})$ of $P'_1$ is level-disjoint with $P'_2=P_2$ since $P_1$, $P_2$ are level-disjoint. Moreover, the remaining vertices of $P'_1$; that is, $w_1,\dots,w_{l_2}$ are in levels up to $l_2$ in $P'_2$ and in levels from $l_1$ in $P'_2$. Hence, each vertex is in different levels on $P'_1$ and $P'_2$, so the claim holds.

Clearly, $P'_1$, $P'_2$ have the same endvertices as $P_1$, $P_2$, respectively. Furthermore, $P'_1$, ${P'}^R_2$ have a longer common prefix or a longer common suffix than $P_1$, $P^R_2$. Hence the above process ends after finitely many steps and produces fully-merged level-disjoint paths between the same vertices as desired. Therefore, by Lemma~\ref{lem:fully}, $B$ has a $v$-cycle $C$ with a chordal path that separates $v$, $\overline{v}_C$.
\end{proof}

If a large amount of data needs to be transmitted from a given vertex to all other vertices in the network with bounded capacity of links, the transmission could be simultaneously done by following as much as possible level-disjoint partitions from the first level to the last level. Therefore, it would be interesting to find a similar structural characterization for existence of $k$ LDPs with the same root as in Theorem~\ref{thm:k=2} also for $k\ge 3$. More formally, it can be stated as follows.

\begin{problem}
Find a structural characterization of graphs admitting $k$ LDPs rooted in a given vertex $v$.
\end{problem}
\noindent Moreover, we propose the following conjecture.

\begin{con}
In a $k$-connected graph ($k \geq 3$) with at least $2k+1$ vertices, there exist $k$ level-disjoint partitions rooted at any given vertex $v$.
\end{con}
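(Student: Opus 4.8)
The plan is to reduce the problem to building a small ``core'' around $v$ and then to spread it over all of $G$. Since a $k$-connected graph with $k\ge 3$ is in particular $2$-connected, $G-v$ is connected, so by Lemma~\ref{lem:LDPdk} it suffices to exhibit a subgraph $H\ni v$ that already carries $k$ $v$-rooted level-disjoint partitions: the extension step then reaches every remaining vertex without obstruction. Crucially, extension is the easy direction---when a new vertex $u$ is attached through a neighbour $w$ whose levels $l_1,\dots,l_k$ in the $k$ partitions are pairwise distinct, $u$ inherits the pairwise distinct levels $l_1+1,\dots,l_k+1$, exactly as in the proof of Lemma~\ref{lem:LDPdk}. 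Thus the whole weight of the statement lies in constructing $H$: a subgraph in which \emph{every} non-root vertex can be assigned $k$ pairwise distinct levels, one per partition, with each partition a valid (gap-free) level partition in the sense of~\eqref{eq:ldp}.

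For the core I would try to generalise the two constructions used for $k=2$ in Theorem~\ref{thm:k=2}: the two-way folding of an odd cycle and the chordal-path layering governed by the separation inequality $a<b+d$ of Observation~\ref{obs:chordal}. Concretely, I would use $k$-connectivity through Menger's theorem and the fan lemma to extract from $G$ a skeleton built of $k$ cycles through $v$ (or an ear decomposition rooted at $v$), and then define the $k$ layerings by a cyclic family of offsets, so that the $i$-th partition traverses the skeleton with shift $i$. This is precisely the mechanism that realises $k$ partitions on $K_{k+1}$, where a cyclic Latin square places each neighbour of $v$ at level $((i+j)\bmod k)+1$ in partition $i$: every row is gap-free and every column consists of $k$ distinct levels. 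The verification would then split into checking that no interior level is empty (a length condition on the cycles) and that each vertex receives $k$ distinct levels (a system of inequalities between the cycle lengths, generalising $a<b+d$ and Lemma~\ref{lem:fully}). An alternative route is to start from $k$ independent spanning trees rooted at $v$---known to exist for $k\le 4$---and to re-layer them so that the $k$ tree-depths of each vertex become distinct; here independence gives internally disjoint $v$--$u$ paths but not distinct depths, and the surplus vertices guaranteed by $n\ge 2k+1$ would be used to create the needed slack.

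The main obstacle is the rigidity of low-degree vertices in the skeleton. A naive skeleton of $k$ internally disjoint $v$--$w$ paths fails: an interior vertex of degree $2$ lies on a single path and, in \emph{every} partition, is forced to essentially its distance from $v$, so it cannot acquire $k$ distinct levels. Hence the skeleton must be chosen so that every non-root vertex lies on enough cycles through $v$ to be layered in $k$ genuinely different ways, while \emph{simultaneously} meeting all $\binom{k}{2}$ pairwise level-disjointness conditions and keeping every partition gap-free---a coupled system far harder to satisfy than the single inequality needed for $k=2$. The case where $G$ is bipartite looks hardest: there all levels of a fixed vertex share parity (as noted after the definition of $R$), so the $k$ levels of each vertex must be $k$ distinct integers of one parity, forcing some vertex to a level as large as $2k-1$. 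Accommodating all required levels is impossible on only $2k$ vertices, as $C_4=K_{2,2}$ already illustrates for $k=2$ and, conjecturally, as $K_{k,k}$ illustrates in general. Pinning down exactly how $k$-connectivity on $2k+1$ vertices rules out such parity obstructions---uniformly in $k$---is, I expect, the crux that keeps the statement at the level of a conjecture.
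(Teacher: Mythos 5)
The statement you were given is a \emph{conjecture} in the paper: the authors offer no proof, only the remark that the bound $2k+1$ on the order is needed because $K_{k,k}$ fails (it has just $k-1$ vertices at distance $2$ from $v$). So there is no ``paper proof'' to match, and your proposal --- as you yourself concede in your closing sentence --- is a research plan rather than a proof. What you get right: the reduction via Lemma~\ref{lem:LDPdk} is sound and is exactly how the paper exploits that lemma in the proof of Theorem~\ref{thm:k=2} (for $k\ge 2$-connected $G$ the graph $G-v$ is connected, so any subgraph $H$ containing $v$ and one further vertex qualifies, and the extension step preserves level-disjointness because distinct levels $l_1,\dots,l_k$ of $w$ yield distinct levels $l_1+1,\dots,l_k+1$ of $u$); your cyclic Latin-square layering of $K_{k+1}$ is a correct verification for that one graph; and your parity analysis of the bipartite case (all levels of a fixed vertex share parity, so some neighbor of $v$ must reach level $2k-1$) agrees with the discussion around \eqref{eq:maxb}.

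The genuine gap is that the core construction --- the entire content of the conjecture --- is never supplied. Menger's theorem and the fan lemma give $k$ internally disjoint paths, but as you note yourself, a degree-$2$ interior vertex of such a skeleton cannot receive $k$ distinct levels, so these tools do not by themselves produce a usable $H$; you do not say which skeleton to take instead, nor do you formulate, let alone solve, the ``system of inequalities'' that would generalize $a<b+d$ from Observation~\ref{obs:chordal} and Lemma~\ref{lem:fully} to $k\ge 3$. The $K_{k+1}$ example only settles graphs that contain a $K_{k+1}$ through $v$, which a general $k$-connected graph need not. The independent-spanning-trees route has the same defect: independence yields internally disjoint $v$--$u$ paths but says nothing about the $k$ depths of $u$ being distinct, and the proposed ``re-layering'' using the slack from $n\ge 2k+1$ is not defined; worse, changing a vertex's level in one tree-induced partition can destroy condition \eqref{eq:ldp} (a vertex must retain a neighbor in the \emph{previous} level of that same partition), so re-layering is not a local operation and nothing in your sketch controls this coupling. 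One small correction: your claim that $K_{k,k}$ fails because $2k$ vertices cannot ``accommodate all required levels'' is not quite an argument --- a height-$(2k-1)$ partition needs $2k-1$ non-root vertices and $K_{k,k}$ has exactly that many; the actual obstruction, as the paper states, is that the $k$ pairwise disjoint nonempty second levels would require $k$ vertices at distance $2$ from $v$ while only $k-1$ exist.
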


Observe that the conjecture does not hold for the graphs of smaller order, since for any given vertex $v$ in $K_{k,k}$ there are only $k-1$ vertices at the distance $2$ from vertex $v$. We are not aware of any other such graph.

For an example of three level-disjoint partitions of a $3$-connected graph and four level-disjoint partitions of a $4$-connected graph see Figure~\ref{fig:kconnected}(a) and (b), respectively.

\section{Level-disjoint partitions of optimal height}\label{sec:height}

In other scenario of simultaneous broadcasting one needs to send out multiple messages with a guarantee that it will take a shortest time possible. This leads to the following question.

\begin{question}
Given a graph $G$, $v\in V(G)$, and $k\ge 2$, what is the smallest maximal height as possible of $k$ LDPs of $G$ rooted in $v$?
\end{question}

First we consider a problem of simultaneous broadcasting of $k$ messages from a single vertex $v$. In our setting, this corresponds to finding $k$ level-disjoint partitions of $G$ with the same root $v$. We start with some obvious necessary conditions on the maximal height of $v$-rooted LDPs.

Assume that $\mathcal{S}^1, \dots, \mathcal{S}^k$ are $v$-rooted LDPs of $G$. Clearly, for every vertex $u$ except $v$,
\begin{align}
\min(R(u))&\ge d(u,v)\quad{\hbox{and}}\label{eq:min}\\
\max(R(u))&\ge d(u,v)+k-1\label{eq:max}
\end{align}
since $u$ cannot appear in a level smaller than the distance to the root $v$ and $|R(u)|=k$. If equality holds in $\eqref{eq:max}$ (and thus also in \eqref{eq:min}), then,
$$R(u)=\{d(u,v),d(u,v)+1,\dots,d(u,v)+k-1\}.$$
This means that all $k$ messages will be delivered to the vertex $u$ in the best time possible for this vertex.

If $G$ is bipartite, then for any same-rooted LDPs of $G$, the $R(u)$ of each vertex $u$ contains elements of the same parity. It follows that for no vertex equality in~\eqref{eq:max} holds (except in the trivial case of a single partition).
If $G$ is bipartite, we may strengthen \eqref{eq:max} by
\begin{equation}
\max (R(u))\ge d(u,v)+2k-2.\label{eq:maxb}
\end{equation}
For bipartite $G$, if equality holds in $\eqref{eq:maxb}$, then,
$$R(u)=\{d(u,v),d(u,v)+2,\dots,d(u,v)+2k-2\}.$$
Necessary conditions on same-rooted LDPs are as follows.

\begin{obs}\label{obs:c2} Let $\mathcal{S}^1,\dots,\mathcal{S}^{k}$ be level-disjoint partitions of a graph $G$ with the same root $v$. Then,
\begin{gather}
\max_{1 \leq i \leq k}h(\mathcal{S}^i) \geq \begin{cases}
\ \ecc(v)+k-1 & \textrm{if $G$ is not bipartite},\\
\ \ecc(v)+2k-2 & \textrm{if $G$ is bipartite}.\end{cases}\label{eq:hei}
\end{gather}
\end{obs}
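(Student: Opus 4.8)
The plan is to apply the two pointwise bounds \myeqref{eq:max} and \myeqref{eq:maxb}, already established in the discussion preceding the statement, to a single carefully chosen vertex, namely one that realizes the eccentricity of the root. First I would pick a vertex $u \neq v$ with $d(u,v) = \ecc(v)$; such a vertex exists because $\ecc(v)$ is by definition the maximal distance from $v$ to all vertices, and it is attained. Since $\mathcal{S}^1,\dots,\mathcal{S}^k$ are level-disjoint with common root $v$, we have $|R(u)| = k$, so $R(u)$ and in particular $\max(R(u))$ are well defined for this $u$.

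Next I would relate the maximal height to $\max(R(u))$. By the definition of $R(u)$, the value $l = \max(R(u))$ is a level index for which $u \in S^i_l$ in some partition $\mathcal{S}^i$. As $l$ is then a genuine level of $\mathcal{S}^i$, we get $h(\mathcal{S}^i) \geq l$, and hence
$$\max_{1 \le i \le k} h(\mathcal{S}^i) \;\geq\; \max(R(u)).$$
The point of this step is that the largest level at which a single fixed vertex occurs across all the partitions is already a lower bound on the maximal height; it is not necessary to reason about all vertices simultaneously, only about one eccentric vertex.

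Finally I would invoke the pointwise bounds for this $u$. If $G$ is not bipartite, \myeqref{eq:max} gives $\max(R(u)) \geq d(u,v) + k - 1 = \ecc(v) + k - 1$; if $G$ is bipartite, \myeqref{eq:maxb} gives the stronger $\max(R(u)) \geq d(u,v) + 2k - 2 = \ecc(v) + 2k - 2$. Chaining each of these with the displayed inequality yields exactly the two cases of \myeqref{eq:hei}.

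Since the real content is carried entirely by \myeqref{eq:max} and \myeqref{eq:maxb} (which in turn rest on $\min(R(u)) \ge d(u,v)$ and on the parity constraint in the bipartite case), there is no genuine obstacle here. The only things to get right are the elementary observation that $\max(R(u))$ is a lower bound for $\max_i h(\mathcal{S}^i)$, and the choice of $u$ as an eccentric vertex so that $d(u,v)$ can be replaced by $\ecc(v)$.
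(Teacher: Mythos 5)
Your proposal is correct and matches the paper's proof essentially verbatim: both pick an eccentric vertex $u$ with $d(u,v)=\ecc(v)$ and apply the pointwise bounds \myeqref{eq:max} and \myeqref{eq:maxb} to it. The only cosmetic difference is that you derive the inequality $\max_{1\le i\le k} h(\mathcal{S}^i) \ge \max(R(u))$ directly for the chosen $u$, whereas the paper invokes the identity $\max_{1\le i\le k} h(\mathcal{S}^i) = \max_{u\in V(G)}\max(R(u))$; your version is, if anything, slightly more careful.
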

\begin{proof}
Let $u$ be an eccentric vertex to $v$; that is, $d(u,v) = \ecc(v)$. By \eqref{eq:max} and \eqref{eq:maxb},
$$\max(R(u))\ge \left\{ \begin{array}{ll}
\ecc(v)+k-1 & \textrm{if $G$ is not bipartite},\\
\ecc(v)+2k-2 & \textrm{if $G$ is bipartite}.
\end{array} \right.$$
Since by definition,
$\max_{1 \leq i \leq k}h(\mathcal{S}^i) = \max_{u \in V(G)} \max(R(u))$,
it follows that also \eqref{eq:hei} holds.
\end{proof}

We are interested in cases when the above bound is tight. If equality holds in \eqref{eq:hei} we say that $\mathcal{S}^1, \dots, \mathcal{S}^k$ have \emph{optimal height}. Observe that if equality in~\eqref{eq:max} or equality in~\eqref{eq:maxb} for bipartite graphs holds for all vertices (up to the root vertex) in LDPs, then LDPs have optimal height. See examples on Figure~\ref{fig:kconnected}(b) and on Figure~\ref{fig:bldps}.

\begin{figure}[h!]
\centerline{\includegraphics[scale=0.8]{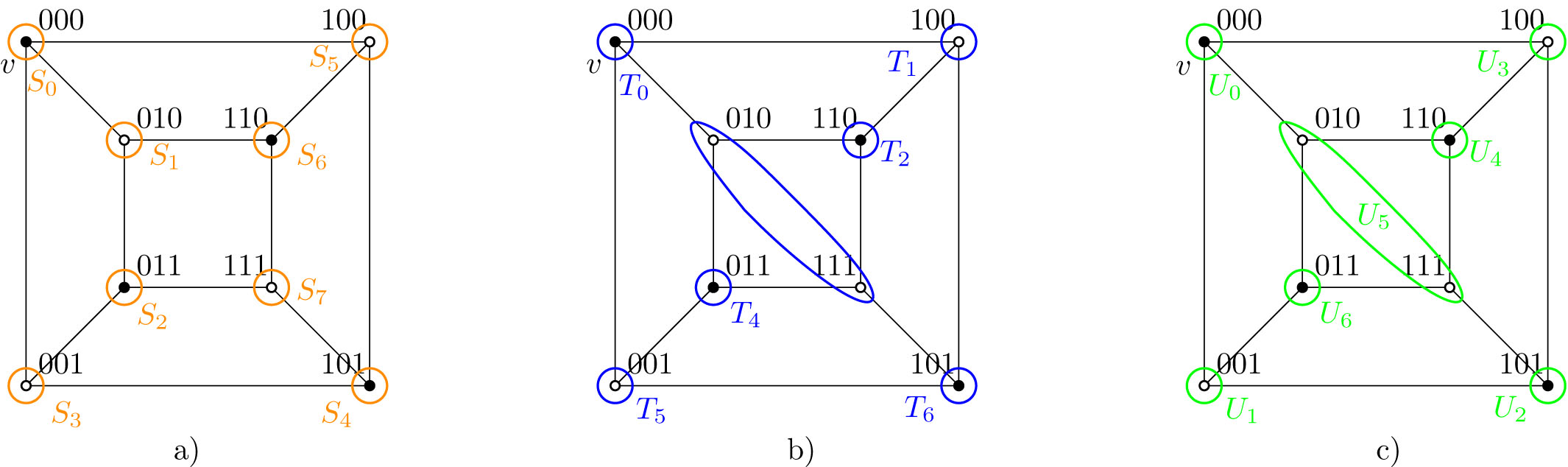}}
\caption{An example of three level-disjoint partitions of the hypercube $Q_3$ rooted in a vertex $v$ in which the inequality~\eqref{eq:maxb} holds for all vertices. a) A level partition $\mathcal{S}=(\{000\},\{010\},$ $\{011\},\{001\},\{101\},\{100\},\{110\},\{111\})$. b) A level partition $\mathcal{T}=(\{000\},$ $\{100\},\{110\},\{010,111\},\{011\},\{001\},\{101\})$. c) A level partition $\mathcal{U}=(\{000\},$ $\{001\},\{101\},\{100\},\{110\},\{010,111\},\{011\})$.}\label{fig:bldps}
\end{figure}

In addition, we provide a necessary condition in terms of girth.
\begin{prop}\label{prop:cond1&2}
Let $G$  be a graph of girth $s$ having $k\ge 2$ $v$-rooted level-disjoint partitions of optimal height. Then,
\begin{equation}\label{eq:ecc}
\ecc(v) \geq \begin{cases}\ s-2 &\text{if $G$ is non-bipartite,}\\
\ s-3 &\text{if $G$ is bipartite.}\end{cases}
\end{equation}
\end{prop}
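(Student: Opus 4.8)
The plan is to first isolate a purely ``local'' obstruction coming from the girth, and only then combine it with optimality. Concretely, I would first establish the following dichotomy: \emph{if $\mathcal{S}$ is any $v$-rooted level partition of a graph of girth $s$ and a vertex $u$ lies in level $\ell$ of $\mathcal{S}$, then either $\ell = d(u,v)$ or $\ell \ge s - d(u,v)$.} To see this, note that starting from $u$ in level $\ell$ and repeatedly applying \eqref{eq:ldp} to follow a neighbour in the preceding level, one obtains vertices in levels $\ell, \ell-1, \dots, 0$; these lie in distinct levels, hence are distinct, so together they form a genuine path $Q$ from $v$ to $u$ of length exactly $\ell$. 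If $\ell \ne d(u,v)$, then $Q$ differs from a shortest $v$--$u$ path $P$ of length $d(u,v)$, and the union $P \cup Q$ of two distinct paths with the same endvertices contains a cycle of length at most $|P|+|Q| = d(u,v)+\ell$. Since every cycle has length at least $s$, this forces $d(u,v)+\ell \ge s$, that is $\ell \ge s - d(u,v)$.

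The key step, and the point where a naive approach fails, is the choice of vertex to which this dichotomy is applied. Applying it to an eccentric vertex is too weak, since its large distance makes $s-d(u,v)$ small; instead I would apply it to a \emph{neighbour} $u$ of $v$, where $d(u,v)=1$ makes the bound $\ell \ge s-1$ as strong as possible. Such a neighbour exists since $\deg(v) \ge k \ge 2$ by Observation~\ref{prop:c1c2}. Because the $k$ partitions are level-disjoint we have $|R(u)|=k$, so $u$ occupies $k$ distinct levels, all at least $1$; at most one of them can equal $d(u,v)=1$, so by the dichotomy at least $k-1$ of them are $\ge s-1$.

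It then remains to count and invoke optimality. In the non-bipartite case these $k-1$ distinct integers, each at least $s-1$, have maximum at least $(s-1)+(k-2)=s+k-3$, so $\max(R(u)) \ge s+k-3$. On the other hand, optimal height together with Observation~\ref{obs:c2} gives $\max(R(u)) \le \max_i h(\mathcal{S}^i) = \ecc(v)+k-1$, and comparing the two inequalities yields $\ecc(v) \ge s-2$. In the bipartite case $s$ is even, so $s-1$ is odd, and all levels in $R(u)$ share the parity of $d(u,v)=1$ and are therefore odd; the $k-1$ conflicting levels are thus distinct odd integers $\ge s-1$, whose maximum is at least $(s-1)+2(k-2)=s+2k-5$. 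Combining this with the bipartite optimal height $\ecc(v)+2k-2$ from Observation~\ref{obs:c2} gives $\ecc(v) \ge s-3$, as claimed. I expect the only real obstacle to be the dichotomy lemma together with the realisation that it must be used at a neighbour of $v$ rather than at a far vertex; once that is in place, the remainder is routine bookkeeping with the parities and the two optimal-height formulas.
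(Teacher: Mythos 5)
Your proof is correct and follows essentially the same route as the paper: examine a neighbor $u$ of $v$, note that its $k$ distinct levels other than possibly $1$ are forced by the girth to be at least $s-1$, count (with parity gaps of $2$ in the bipartite case) to bound $\max(R(u))$ from below, and compare with the optimal-height bound from Observation~\ref{obs:c2}. Your dichotomy lemma merely makes explicit the cycle argument behind the paper's unproved assertion that $l_2 \ge s-1$, which is a nice touch but not a different approach.
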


\begin{proof}
Let $\mathcal{S}^1,\dots,\mathcal{S}^{k}$ be level-disjoint partitions of $G$ rooted in $v$ and let $u$ be a vertex from the first level of some partition. Note that $u$ is a neighbor of $v$. Since $\mathcal{S}^1,\dots,\mathcal{S}^{k}$ are level-disjoint, $u$ belongs to distinct levels, and denote these levels by $l_1=1,l_2,\ldots,l_k$ increasingly ordered. Since $G$ has girth $s$, we obtain that $l_2\ge s-1$.

Assume first that $G$ is non-bipartite. Since $s-1 \le l_2<l_3<\cdots <l_k$, we conclude that $l_k\ge s+k-3$. From the other side, the assumption on the optimal height gives us $l_k\le \ecc(v) + k -1$. Hence,
$s+k-3 \leq \ecc(v) + k -1$,
which implies \eqref{eq:ecc}.

If $G$ is bipartite, from $s-1\le l_2$ and $l_i + 2 \le l_{i+1}$ for each $i\ge 1$, we conclude that $l_k\ge s + 2 k -5$. The assumption on the optimal height gives us $l_k\le \ecc(v) + 2k -2$. Hence,
$s + 2 k -5\le \ecc(v) + 2k -2$,
which again implies \eqref{eq:ecc}.
\end{proof}
\begin{rem}Note that Proposition~\ref{prop:cond1&2} can be strengthened for local girth of the root instead of (global) girth of $G$. A \emph{local girth} of a vertex $v$ is the minimal length of a $v$-cycle.
\end{rem}

For example, consider Petersen graph, which is a 3-regular vertex-transitive non-bipartite graph of girth $5$ and eccentricity $2$ at every vertex. Note that condition~\eqref{eq:ecc} does not hold for any vertex which implies that Petersen graph has neither $2$ nor $3$ level-disjoint partitions of optimal height.

\begin{figure}[h!]
\centerline{\includegraphics[scale=0.8]{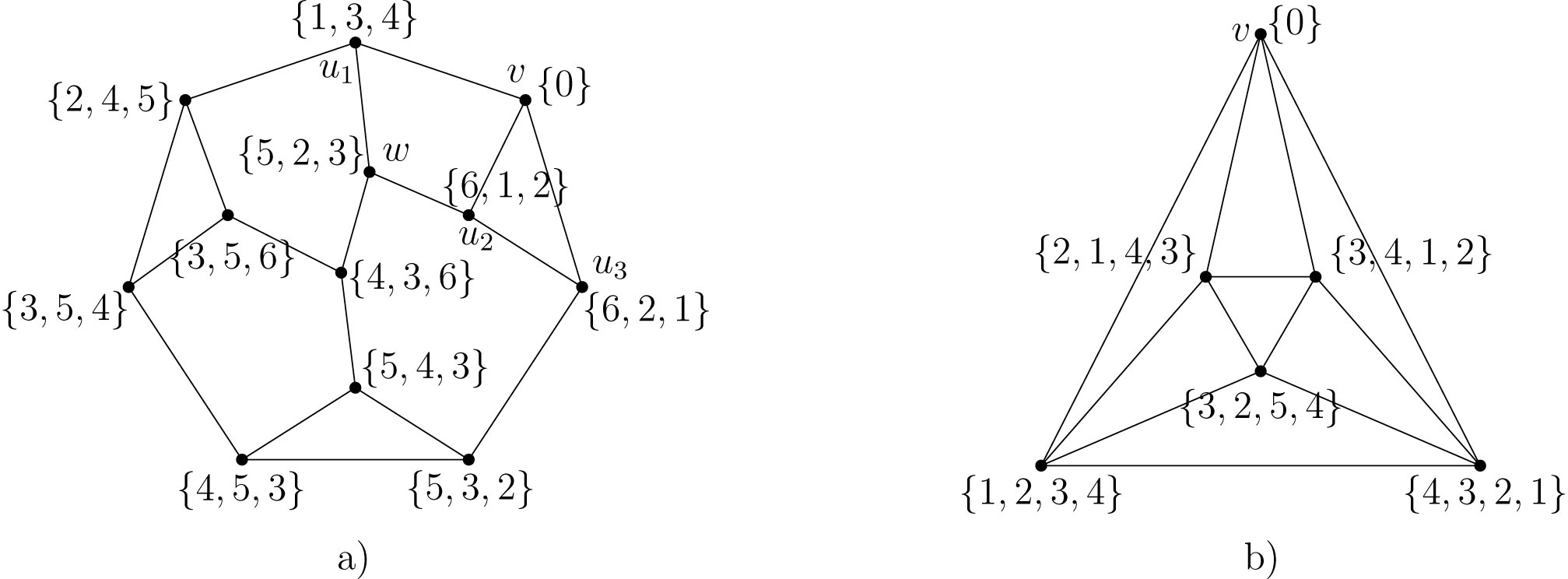}}
\caption{Examples of optimal number of LDPs of $3$-connected and $4$-connected graphs. a) Three $v$-rooted LDPs of a $3$-connected graph. b) Four $v$-rooted LDPs of optimal height of a $4$-connected graph.}\label{fig:kconnected}
\end{figure}
On the other hand, the condition from Proposition~\ref{prop:cond1&2} is necessary but it is not sufficient, what is evident from an example in Figure~\ref{fig:kconnected}(a). Let $G$ be a graph from Figure~\ref{fig:kconnected}(a). Observe that $G$ is non-bipartite, has girth $3$, and $\ecc(v) = 2$. Let $\mathcal{S},\mathcal{T},\mathcal{U}$ be level-disjoint partitions rooted in a vertex $v$ and let $u_1,u_2,u_3$ be vertices of $S_1, T_1, U_1$, respectively. Note, that $w$ cannot belong to both $S_2, T_2$. It is easy to see that, if $w \in T_2$, then $R_{\mathcal{S}}(u_3) \geq 6$. Similarly, if $w \in S_2$, then $R_{\mathcal{T}}(u_1) \geq 7$. Hence, $G$ has no three LDPs of optimal height rooted in a vertex $v$.

\section{Conclusion and further work}

Simultaneous broadcasting of multiple messages from the same source in the considered communication model is appropriately captured by the concept of same-rooted level-disjoint partitions of graphs. It was originally introduced in \cite{GSV} and further developed here.

In the context of broadcasting, the same concept can be employed to describe simultaneous broadcasting of multiple messages from distinct sources, possibly with each message having multiple originators. This can be subject of further research.

In Theorem~\ref{thm:k=2} we characterized graphs admitting two level-disjoint partitions rooted in a given vertex. It would be interesting to find a similar characterization also for a larger number of level-disjoint partitions.

As a further work in~\cite{GSVb} we study weather a local solution on a suitable subgraph can be without loss of optimality extended to the whole graph and we specify subgraphs which lead to simultaneous broadcasting in optimal time. In this context we study subgraphs of Cartesian products, in particular, bipartite tori, meshes, and hypercubes. 

Finally, one can further study the same problem under additional assumptions on the communication model. In this paper we assumed the all-out port model and the full-duplex mode. It would be interesting to obtain similar results also when the number of outgoing messages is restricted and/or for the half-duplex mode.

\end{document}